\newcommand{\singbr}[1]{\left[#1\right]}
\newcommand{\myset}[1]{\left\{#1\right\}}
\newcommand{\paren}[1]{\left(#1\right)}
\newcommand{\expected}[2]{\underset{#1}{\E}\singbr{#2}}
\newcommand{\abs}[1]{\left|#1\right|}
\newcommand{\wh}[1]{\ensuremath{\widehat{#1}}}
\DeclareFontFamily{U}{mathx}{\hyphenchar\font45}
\DeclareFontShape{U}{mathx}{m}{n}{<-> mathx10}{}
\DeclareSymbolFont{mathx}{U}{mathx}{m}{n}
\DeclareMathAccent{\widebar}{0}{mathx}{"73}
\newcommand{\ep}{\ensuremath{\epsilon}}
\newcommand{\de}{\ensuremath{\delta}}
\newcommand{\si}{\ensuremath{\sigma}}
\newcommand{\R}{\ensuremath{\mathbb{R}}}
\newcommand{\Zi}{\ensuremath{\mathbb{Z}}}
\newcommand{\E}{\ensuremath{\mathbb{E}}}
\newcommand{\ra}{\ensuremath{\rightarrow}}
\definecolor{light-gray}{gray}{0.80}
\definecolor{darkred}{rgb}{0.64, 0.0, 0.0}
\newtheorem{thm}{Theorem}[section]
\newtheorem{lem}[thm]{Lemma}
\newtheorem{defn}{Definition}[section]
\newenvironment{itemizesquish}{\begin{list}{\labelitemi}{\setlength{\itemsep}{-0.2em}\setlength{\labelwidth}{0.5em}\setlength{\leftmargin}{\labelwidth}
\addtolength{\leftmargin}{\labelsep}}}{\end{list}}
\def\blfootnote{\gdef\@thefnmark{}\@footnotetext}
\begin{document}

\twocolumn[

\aistatstitle{Formal Limitations on the Measurement of Mutual Information}

%\aistatsauthor{ David McAllester \And Karl Stratos$^*$ }
\aistatsauthor{ David McAllester \And Karl Stratos }
\aistatsaddress{ Toyota Technological Institute at Chicago \And Rutgers University } ]

\begin{abstract}
  Measuring mutual information from finite data is difficult.
  Recent work has considered variational methods maximizing a lower bound.
  In this paper, we prove that serious statistical limitations are inherent to any method of measuring mutual information.
  More specifically, we show that any distribution-free high-confidence lower bound
  on mutual information estimated from $N$ samples cannot be larger than $O(\ln N)$.
\end{abstract}

\section{INTRODUCTION}

%\blfootnote{$^*$Work partly done while the author was at Toyota Technological Institute at Chicago}
Mutual information has important applications to unsupervised learning.
It underlies classical representation learning methods such as Brown clustering \citep{brown1992class}, INFOMAX \citep{bell1995information},
and the information bottleneck method \citep{bottleneck}.
Maximizing mutual information is also central to recent works on unsupervised representation learning with neural networks \citep{IT-cotrain,MINE,Contrastive,PartOfSpeech,DIM}.

Unfortunately, measuring mutual information from finite data is a notoriously difficult estimation problem.
This difficulty has motivated researchers to consider more computationally amenable measurement methods that maximize a parameterized lower bound on mutual information.
The hope is that the optimized value of the bound is close to the true value of mutual information
and can serve as a good enough approximation.
For instance, this is the approach advocated by Mutual Information Neural Estimator (MINE) \citep{MINE}
and contrastive predictive coding (CPC) \citep{Contrastive}.

Here we prove that serious statistical limitations are inherent to all methods of measuring mutual information.
More specifically, we show that any distribution-free high-confidence lower bound on mutual information cannot be larger than $O(\ln N)$ where $N$ is the number of samples.
Thus a meaningful high-confidence lower bound is infeasible when underlying mutual information is large (e.g., hundreds of bits).

Our result corrects, generalizes, and unifies past work on measuring mutual information.
Unlike previous intractability results that are tied to specific estimators \citep{KNN-MI2,Contrastive},
our result is universal to all estimators.
Hence we show that without making strong assumptions on the population distribution, such as the small support assumption in \citet{valiant2011estimating} and minimax bounds \citep{jiao2015minimax},
it is generally impossible to guarantee an accurate estimate of mutual information.  Our results contradict a theorem
in \citet{MINE} claiming a polynomial sample size convergence rate for a mutual information estimator and we point out an error in their proof.

While it is infeasible to give meaningful high-confidence lower bound guarantees for large mutual information, estimators lacking formal guarantees might still be useful in practice.
To this end, we propose expressing mutual information as a difference of entropies and estimating the entropy terms by cross-entropy upper bounds.
This difference-of-entropies (DoE) estimator gives neither an upper bound nor a lower bound guarantee on mutual information.
Nevertheless, we give theoretical and empirical evidence that DoE can meaningfully estimate large mutual information from feasible samples.

\subsection{Overview}
\label{sec:overview}
We state our main result below. We write $(X, Y)$ to denote a pair of random variables with ranges $(\mathcal{X}, \mathcal{Y})$.
Unless otherwise specified, they can be discrete or continuous.
For simplicity, all distributions are assumed to have full support.

\begin{thm}
  Let $B$ be any mapping from $N$ samples of $(X,Y)$ to a real number with the following property:
  for any distribution $p_{XY}$ over $(X, Y)$,
  given $N$ iid samples $(x_1, y_1) \ldots (x_N, y_N) \sim p_{XY}$, with probability at least 0.99
  \begin{align*}
    I(X,Y; p_{XY}) \geq B((x_1, y_1) \ldots (x_N, y_N))
  \end{align*}
  where $I(X,Y; p_{XY})$ is the mutual information between $(X,Y)$ under $p_{XY}$.
  Now pick any distribution $q_{XY}$ over $(X,Y)$.
  Then given $N \geq 50$ iid samples $(x'_1, y'_1) \ldots (x'_N, y'_N) \sim q_{XY}$, with probability at least 0.96
  \begin{align*}
    B((x'_1, y'_1) \ldots (x'_N, y'_N)) \leq 2 \ln N + 5
  \end{align*}
  \label{thm:main}
\end{thm}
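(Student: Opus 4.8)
The plan is to exploit the well-known failure of mutual information to be continuous in the underlying distribution: I will build a distribution $p$ whose mutual information is forced below $O(\ln N)$ by a hard cap, yet whose law on $N$ samples is essentially indistinguishable from that of $q_{XY}$, and then push the assumed high-confidence guarantee for $B$ from $p$ over to $q_{XY}$.

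First I would fix an integer $K \approx N^2$ and introduce a randomized construction: draw $K$ ``prototype'' pairs $(a_1,b_1),\dots,(a_K,b_K)$ i.i.d.\ from $q_{XY}$, and let $p_{\mathcal{P}}$ be the uniform distribution on this multiset of $K$ pairs. The whole point of this construction is that $p_{\mathcal{P}}$ is supported on at most $K$ points, so $H(X)\le\ln K$ and hence $I(X,Y;p_{\mathcal{P}})\le\ln K$ holds deterministically, whether $q_{XY}$ is discrete or continuous. Applying the hypothesis to each fixed distribution $p_{\mathcal{P}}$ gives, for $N$ i.i.d.\ samples $S$ drawn from $p_{\mathcal{P}}$, that $B(S)\le I(X,Y;p_{\mathcal{P}})\le\ln K$ with probability at least $0.99$; averaging over the prototypes, this holds with probability at least $0.99$ under the two-stage law $\mu$ that first draws the prototypes and then draws $S$.

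Next I would identify $\mu$ with $q_{XY}^{\otimes N}$ on a high-probability event. Sampling $S$ from $p_{\mathcal{P}}$ amounts to choosing indices $J_1,\dots,J_N$ i.i.d.\ uniform on $[K]$ and outputting $(a_{J_i},b_{J_i})$. On the ``no-collision'' event $\mathcal{E}$ that the $J_i$ are distinct, the selected pairs are $N$ distinct coordinates of an i.i.d.\ vector and are therefore themselves i.i.d.\ from $q_{XY}$; thus the conditional law of $S$ given $\mathcal{E}$ is exactly $q_{XY}^{\otimes N}$. A birthday bound gives $\Pr_\mu[\mathcal{E}]\ge 1-\binom{N}{2}/K$, and choosing $K$ so that $\binom{N}{2}/K\le 0.03$ keeps $\Pr_\mu[\neg\mathcal{E}]\le 0.03$. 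Combining via the conditioning inequality $\Pr[A\mid\mathcal{E}]\ge\Pr[A]-\Pr[\neg\mathcal{E}]$, I get $\Pr_{S\sim q_{XY}^{\otimes N}}[B(S)\le\ln K]=\Pr_\mu[B(S)\le\ln K\mid\mathcal{E}]\ge 0.99-0.03=0.96$.

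It then remains to check the constant: taking $K$ just large enough that $\binom{N}{2}/K\le 0.03$ yields $\ln K\le 2\ln N + c$ for a modest $c$, and the slack in ``$+5$'' together with the hypothesis $N\ge 50$ comfortably absorbs the rounding of $K$ to an integer and the $\ln 2$ and $\ln(1/0.03)$ terms. The one genuinely delicate point is the construction itself: engineering a single distribution that simultaneously has its mutual information capped below $\ln K$ by a support-size bound \emph{and} reproduces the samples of $q_{XY}$ exactly on a $(1-0.03)$-probability event. The birthday trade-off between these two demands --- support size $\sim N^2$ against collision probability $\sim\binom{N}{2}/K$ --- is precisely what forces the $2\ln N$ rate; everything else is bookkeeping.
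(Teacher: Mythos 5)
Your proof is correct, and it reaches Theorem~\ref{thm:main} by a genuinely different route than the paper. The paper never attacks the mutual-information statement directly: it proves Theorem~\ref{thm:entropy} using a \emph{deterministic} adversary --- sort the support of $p_X$, keep the $kN^2$ heaviest atoms, and spread the residual tail mass uniformly over $kN^2$ fresh atoms --- argues indistinguishability through equality of the conditional \emph{type} distributions given that no tail atom repeats, and then derives Theorem~\ref{thm:main} from this via $I(X,Y) \leq H(X)$ together with the supremum-over-binnings identity \eqref{eq:binning} for the continuous case. Your adversary is instead \emph{randomized}: the empirical distribution $p_{\mathcal{P}}$ of $K \approx N^2$ i.i.d.\ prototypes drawn from $q_{XY}$ itself. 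Both arguments share the same engine --- a support-size cap of order $N^2$ (forcing the $2\ln N$ rate) traded against a birthday-paradox collision probability --- but your coupling is stronger: conditioned on no index collision, the sample law is \emph{exactly} $q_{XY}^{\otimes N}$, so the transfer works for a $B$ that is an arbitrary function of the raw samples, and for continuous $q_{XY}$, with no discrete/continuous case split. This in fact closes a gap the paper leaves implicit: Theorem~\ref{thm:entropy} only covers bounds that depend on the sample through its type, while the $B$ of Theorem~\ref{thm:main} sees the samples themselves, and the paper's one-sentence derivation of the latter from the former glosses over that mismatch. What the paper's route buys in exchange is intermediate results of independent interest (the KL-divergence limitation of Theorem~\ref{thm:kl}, and the entropy-from-types statement, which connects to the Good--Turing literature), whereas your argument is tailored to the mutual-information statement. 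Two minor notes: applying the premise to the finitely supported $p_{\mathcal{P}}$ sets aside the paper's blanket full-support convention, but the paper's own adversary in Theorem~\ref{thm:entropy} takes the same license; and your constants do check out, since $K = \left\lceil \binom{N}{2}/0.03 \right\rceil$ gives $\ln K \leq 2\ln N + \ln(1/0.06) + 0.01 < 2\ln N + 2.9 < 2\ln N + 5$, with confidence $0.99 - 0.03 = 0.96$ exactly as required.
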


In the rest of the paper, we build toward this result by proving statistical limitations on
measuring lower bounds on Kullback-Leibler (KL) divergence and entropy.
We derive several intermediate results, specifically:

\begin{itemizesquish}
\item We first consider the Donsker-Varadhan lower bound on KL divergence which underlies the approach in MINE.
  We give an intuitive explanation on why estimating the bound from samples is problematic.
  We show that the polynomial sample complexity result given by \citet{MINE} is incorrect.
\item We formally prove that any lower bound on KL divergence cannot be larger than $O(\ln N)$. This subsumes the Donsker-Varadhan bound as a special case.
\item We formally prove that any lower bound on entropy cannot be larger than $O(\ln N)$. Theorem~\ref{thm:main} is a special case of this result.
\item We motivate measuring mutual information as a difference of entropies, each of which measured by minimizing cross entropy.
  We empirically show that it outperforms existing variational lower bounds in synthetic experiments
  and produce realistic estimates of mutual information in real-world datasets.
\end{itemizesquish}

\section{ISSUES WITH THE DONSKER-VARADHAN LOWER BOUND}
\label{sec:motivation}

The Donsker-Varadhan (DV) lower bound on KL divergence is stated below. For completeness, a simple proof is given in the supplementary material. %Appendix~\ref{app:dv}.

\begin{thm}[\citealp{DV}]
  Let $p_X$ and $q_X$ be distributions over $X$ with finite KL divergence.
  Then for all bounded functions $f:\mathcal{X} \ra \R$
  \begin{align}
    D_{\mathrm{KL}}(p_X||q_X) \geq  \mathrm{DV}_f(p_X||q_X) \label{ineq:dv}
  \end{align}
  where
  \begin{align}
    \mathrm{DV}_f(p_X||q_X) := \expected{x \sim p_X}{f(x)} - \ln \expected{x \sim q_X}{e^{f(x)}} \label{def:dv}
  \end{align}
  Moreover, \eqref{ineq:dv} holds with equality for some $f$ with range $[0, F_{\mathrm{max}}]$.
  \label{thm:dv}
\end{thm}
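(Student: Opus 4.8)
The plan is to prove the inequality by introducing the tilted (Gibbs) distribution associated with $f$ and invoking the non-negativity of KL divergence, then to establish tightness by exhibiting the explicit maximizer of $\mathrm{DV}_f$ and shifting it into the required range.

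First, for a bounded $f$ I would define the distribution $g_X$ on $\mathcal{X}$ by
\[
  g_X(x) = \frac{q_X(x)\, e^{f(x)}}{\expected{x' \sim q_X}{e^{f(x')}}},
\]
which is well-defined and normalized precisely because $f$ is bounded, so the denominator is finite and strictly positive. The key computation is to expand $D_{\mathrm{KL}}(p_X||g_X) = \expected{x\sim p_X}{\ln\paren{p_X(x)/g_X(x)}}$; substituting the definition of $g_X$ gives $\ln\paren{p_X/g_X} = \ln\paren{p_X/q_X} - f + \ln\expected{x'\sim q_X}{e^{f(x')}}$, so that $D_{\mathrm{KL}}(p_X||g_X) = D_{\mathrm{KL}}(p_X||q_X) - \mathrm{DV}_f(p_X||q_X)$. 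Since KL divergence is always non-negative, this rearranges immediately into \eqref{ineq:dv}.

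Second, for the equality claim I would identify the maximizer directly. Taking $f^*(x) = \ln\paren{p_X(x)/q_X(x)}$ forces $g_X = p_X$, hence $D_{\mathrm{KL}}(p_X||g_X) = 0$ and the bound is tight; equivalently, $\expected{x\sim q_X}{e^{f^*(x)}} = \expected{x\sim q_X}{p_X(x)/q_X(x)} = 1$, so the logarithmic term vanishes and $\mathrm{DV}_{f^*} = D_{\mathrm{KL}}(p_X||q_X)$. The last step is to move this optimizer into $[0, F_{\mathrm{max}}]$: because $\mathrm{DV}_{f+c} = \mathrm{DV}_f$ for any constant $c$ (adding $c$ shifts $\expected{x\sim p_X}{f}$ and $\ln\expected{x\sim q_X}{e^{f}}$ by the same amount), I can replace $f^*$ by $f^* - \inf_x f^*(x)$ without changing the value, obtaining a function with infimum $0$ and supremum $F_{\mathrm{max}} := \sup_x \ln\paren{p_X/q_X} - \inf_x \ln\paren{p_X/q_X}$.

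The main obstacle is the interaction between the boundedness hypothesis and exact tightness. The inequality is asserted only for bounded $f$, and $f^*$ attains equality exactly only when the log-likelihood ratio $\ln\paren{p_X/q_X}$ is itself bounded, so that its dynamic range $F_{\mathrm{max}}$ is finite and the shifted optimizer genuinely lands in a finite interval. I would therefore read the equality claim as pertaining to the regime where finiteness of $D_{\mathrm{KL}}(p_X||q_X)$ together with full support yields a bounded log-ratio; when that ratio is unbounded one recovers equality only in the limit $F_{\mathrm{max}} \ra \infty$ through truncated approximations of $f^*$, and I would flag explicitly that \enquote{equality for some bounded $f$} presumes a bounded log-ratio.
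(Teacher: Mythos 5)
Your proof is correct and takes essentially the same route as the paper's: both introduce the tilted distribution $r_X(x) \propto q_X(x)e^{f(x)}$, observe that $D_{\mathrm{KL}}(p_X||q_X) - \mathrm{DV}_f(p_X||q_X) = D_{\mathrm{KL}}(p_X||r_X) \geq 0$, identify the optimizer $f^*(x) = \ln\paren{p_X(x)/q_X(x)}$, and invoke translation invariance of $\mathrm{DV}_f$ to place the optimizer's range in $[0, F_{\mathrm{max}}]$. Your closing caveat---that exact equality at a \emph{bounded} $f$ presupposes a bounded log-likelihood ratio, with unbounded ratios handled only in the limit via truncation---is a legitimate point of rigor that the paper's proof glosses over, but it does not alter the substance of the argument.
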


The DV bound \eqref{def:dv} computes the difference between the expected value of $f(x)$ under $p_X$
and the log of the expected value of the exponential of $f(x)$ under $q_X$.
It can be easily estimated by sampling: given $x_1 \ldots x_N \sim p_X$ and $x'_1 \ldots x'_N \sim q_X$, we can compute the empirical estimate

{\small
\begin{align}
  \wh{\mathrm{DV}}_f^N(p_X||q_X) := \frac{1}{N} \sum_{i=1}^N  f(x_i) - \ln \paren{ \frac{1}{N} \sum_{i=1}^N e^{f(x'_i)} } \label{def:dv-emp}
\end{align}
}

\noindent
An application of the DV bound is measuring KL divergence.
Specifically, we estimate KL divergence by estimating the associated DV bound from samples:
\begin{align*}
  D_{\mathrm{KL}}(p_X||q_X) &\geq \mathrm{DV}_f(p_X||q_X) \\ &\gtrsim \wh{\mathrm{DV}}_f^N(p_X||q_X)
\end{align*}
The first inequality holds for all choices of $f$ by Theorem~\ref{thm:dv}.
We require that the second inequality holds with high probability (with respect to random sampling).
We now show that this approach to measuring KL divergence is problematic.

\subsection{Statistical Limitations on Measuring the DV Bound}
\label{subsec:dv}

A crucial observation is that the second term in the DV bound \eqref{def:dv} involves an expectation of the exponential
\begin{align*}
  \expected{x \sim q_X}{e^{f(x)}}
\end{align*}
This expression has the same form as the moment generating function used in analyzing large deviation probabilities.
The utility of expectations of exponentials in large deviation theory is that such expressions can be dominated by extremely rare events (large deviations).
The rare events dominating the expectation will never be observed by sampling from $q_X$.

To quantitatively analyze the risk of unseen outlier events, we will make use
of the following simple lemma where we write $p_X(\Phi[x])$
for the probability over drawing $x$ from $p_X$ that the statement $\Phi[x]$ holds.

\begin{lem}[Outlier risk lemma]
Given $N \geq 2$ samples from $p_X$ and a property $\Phi[x]$ such that $p_X(\Phi[x]) \leq 1/N$,
the probability that no sample $x$ satisfies $\Phi[x]$ is at least 1/4.
\label{lem:outlier}
\end{lem}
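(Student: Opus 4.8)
The plan is to reduce the statement to a single elementary inequality about $(1-1/N)^N$. First I would set $p := p_X(\Phi[x])$ and record the hypothesis $p \le 1/N$. Since the $N$ samples are drawn independently, the event that \emph{no} sample satisfies $\Phi$ is the intersection of $N$ independent events, each of probability $1-p$, so its probability is exactly $(1-p)^N$.

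Next I would exploit monotonicity in $p$. The map $p \mapsto (1-p)^N$ is decreasing on $[0,1]$, so the smallest value consistent with the constraint $p \le 1/N$ is attained at the extreme $p = 1/N$. This yields the uniform lower bound
\begin{align*}
  (1-p)^N \ge \paren{1 - \tfrac{1}{N}}^N.
\end{align*}

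It then remains to show $\paren{1-1/N}^N \ge 1/4$ for every integer $N \ge 2$. The key observation is that the sequence $a_N := \paren{1-1/N}^N$ is increasing in $N$ (equivalently, $N \ln(1-1/N)$ is increasing, which follows from a short derivative computation) and converges upward to $1/e$. Hence its minimum over $N \ge 2$ is the base case $a_2 = (1/2)^2 = 1/4$, which gives $\paren{1-1/N}^N \ge 1/4$ and completes the argument.

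The main obstacle is a mild one: justifying the monotonicity of $a_N$ cleanly, since everything else is immediate. It is worth noting that the constant $1/4$ is exactly tight, being achieved at $N = 2$, so no better universal constant can be extracted from this line of reasoning; this is precisely why the threshold $p_X(\Phi[x]) \le 1/N$ is the natural scale at which rare events remain unseen with constant probability.
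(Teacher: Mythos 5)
Your proof is correct and takes essentially the same approach as the paper's, which also reduces the claim to the bound $(1-1/N)^N \geq 1/4$ for $N \geq 2$. You simply make explicit the steps the paper leaves implicit: independence giving $(1-p)^N$, monotonicity in $p$, and monotonicity of $(1-1/N)^N$ in $N$ so that the minimum is attained at $N=2$.
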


\begin{proof}
  The probability that $\Phi[x]$ is unseen in the sample is at least $(1-1/N)^N$ which is at least $1/4$ for $N \geq 2$
  and where $\lim_{N \rightarrow \infty} (1-1/N)^N = 1/e$.
\end{proof}

We can use the outlier risk lemma to perform a quantitative risk analysis of the DV bound.
Assume without loss of generality that $[0, F_{\max}]$ is the range of $f$ taken on $\mathcal{X}$.
Let us consider the best case scenario in which the empirical estimate \eqref{def:dv-emp} is the largest.
It is easy to see that the largest value is $F_{\max}$ and attained when
\begin{align*}
  f(x_i) &=F_{\max} &&\forall i = 1 \ldots N\\
  f(x'_i) &= 0    &&\forall i = 1 \ldots N
\end{align*}
where $x_1 \ldots x_N$ are samples from $p_X$ and $x'_1 \ldots x'_N$ are samples from $q_X$.
But by the outlier risk lemma, there is still at least a 1/4 probability that
\begin{align*}
  \expected{x \sim q_X}{e^{f(x)}} \geq \frac{1}{N} e^{F_{\max}}
\end{align*}
Since we require that \eqref{def:dv-emp} is a high-confidence lower bound on the DV bound, it must account for the unseen outlier
risk.\footnote{We intentionally keep this argument informal to give intuition since the result on the DV bound will be subsumed by the general result in Theorem~\ref{thm:kl}.}
In particular, we must have
\begin{align*}
  \wh{\mathrm{DV}}_f^N(p_X||q_X) & \leq F_{\max} - \ln\frac{e^{F_{\max}}}{N} =  \ln N
\end{align*}

\subsection{Discussion of MINE}
\label{sec:mine}

Our investigation of measuring the DV bound from samples is motivated by \citet{MINE} who propose measuring and maximizing mutual information by formulating it as KL divergence and considering the DV lower bound.
More specifically, they introduce a function $f: \mathcal{X} \times \mathcal{Y} \ra \R$ parameterized by a neural network and perform gradient descent to optimize
\begin{align}
  \sup_f \frac{1}{N} \sum_{i=1}^N f(x_i, y_i) - \ln \frac{1}{N} \sum_{i=1}^N e^{f(x'_i, y'_i)} \label{eq:mine}
\end{align}
where $(x_1, y_1) \ldots (x_N, y_N)$ are drawn from $p_{XY}$ and $(x'_1, y'_1) \ldots (x'_N, y'_N)$ are drawn from $p_X \times p_Y$.
This is an empirical estimate of the DV bound on
\begin{align}
  D_{\mathrm{KL}}(p_{XY}||p_X \times p_Y) = I(X,Y; p_{XY}) \label{eq:mikl}
\end{align}
They claim that \eqref{eq:mine} yields a high-confidence accurate measurement of underlying mutual information with polynomial sample complexity under mild assumptions (Theorem 3 in their paper),
which apparently contradicts our previous observation.

Upon inspection, we have found that their claim is wrong.
In the appendix of the arXiv version v4, they make an incorrect application of the
Hoeffding inequality in equation (46) from which they incorrectly derive equation (49).
Hoeffding depends on the bounded range of the random variable: (49) is bounding the exponential of the variable.
Thus their bound stated in Theorem 3 has an exponential dependence on the variable $M$.%\footnote{The authors of MINE have uploaded a corrected version of their paper on arXiv after personal communication with us.}

\section{STATISTICAL LIMITATIONS ON MEASURING LOWER BOUNDS ON KL DIVERGENCE}

The analysis given in Section~\ref{subsec:dv} is specific to the DV bound.
One might hope that there exists a different lower bound on KL divergence
(e.g., $D_{\mathrm{KL}}(p_X||q_X) = \sup_{f > 0} \expected{x \sim p_X}{\ln f(x)} - \expected{x \sim q_X}{f(x)} + 1$
in \citet{nguyen2010estimating} which does not involve an expectation of the exponential)
that overcomes the limitation.

We now strengthen the result by formally proving that no lower bound on KL divergence estimated from samples can be large.
We consider a more challenging setting in which we have perfect knowledge of $p_X$ (i.e., we can compute probabilities under the distribution)
and only sample from $q_X$ to estimate a lower bound on $D_{\mathrm{KL}}(p_X||q_X)$.
Even in this setting, we have the following negative result.

\begin{thm}
  Let $B$ be any distribution-free high-confidence lower bound on $D_{\mathrm{KL}}(p_X||q_X)$ computed with complete knowledge of
  $p_X$ but only samples from $q_X$.

More specifically, let $B(p_X,S,\delta)$ be any real-valued function of a distribution $p_X$, a multiset $S$,
and a confidence parameter $\delta$  such that, for any $p_X$, $q_X$ and $\delta$,
with probability at least $1-\delta$ over a draw of $S$ from $q_X^N$ we have
\begin{align*}
  D_{\mathrm{KL}}(p_X||q_X) \geq B(p_X,S,\delta)
\end{align*}
For any such bound, and for $N \geq 2$, for any $q_X$, with probability at least $1 -4\delta$
over the draw of $S$ from $q_X^N$ we have
\begin{align*}
  B(p_X,S,\delta) \leq \ln N
\end{align*}
\label{thm:kl}
\end{thm}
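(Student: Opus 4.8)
The plan is to exhibit, for any target pair $(p_X, q_X)$, an auxiliary distribution $\tilde q_X$ that is simultaneously close to $p_X$ in KL divergence and close enough to $q_X$ that an $N$-sample from it is likely to look exactly like a sample from $q_X$. Concretely I would take the mixture
\[
  \tilde q_X = \paren{1 - \tfrac{1}{N}}\, q_X + \tfrac{1}{N}\, p_X,
\]
that is, the distribution that draws from $q_X$ with probability $1 - 1/N$ and from $p_X$ with probability $1/N$. The weight $1/N$ is the only tuning parameter and is chosen to balance the two requirements below.

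The first step is to control $D_{\mathrm{KL}}(p_X \| \tilde q_X)$. Since $\tilde q_X(x) \geq \tfrac{1}{N} p_X(x)$ pointwise, we have $p_X(x)/\tilde q_X(x) \leq N$ for every $x$, and therefore
\[
  D_{\mathrm{KL}}(p_X \| \tilde q_X) = \expected{x \sim p_X}{\ln \frac{p_X(x)}{\tilde q_X(x)}} \leq \ln N.
\]
Now I would invoke the defining guarantee of $B$ on the pair $(p_X, \tilde q_X)$: with probability at least $1 - \delta$ over a draw of $S$ from $\tilde q_X^N$ we have $B(p_X, S, \delta) \leq D_{\mathrm{KL}}(p_X \| \tilde q_X) \leq \ln N$. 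So the desired conclusion already holds when the samples come from $\tilde q_X$; what remains is to transfer it to samples from $q_X$.

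The crux is this transfer step, and I would carry it out by a coupling. A draw $S \sim \tilde q_X^N$ can be generated coordinatewise by an independent coin that selects the $q_X$ component with probability $1 - 1/N$. Let $E$ be the event that all $N$ coins select $q_X$; then $\Pr[E] = \paren{1 - 1/N}^N \geq 1/4$ for $N \geq 2$ (exactly the bound used in the outlier risk Lemma~\ref{lem:outlier}), and conditioned on $E$ the sample $S$ is distributed precisely as $q_X^N$. Writing $A$ for the failure event $\myset{B(p_X, S, \delta) > \ln N}$, which depends only on $S$, the previous step gives $\Pr_{\tilde q_X^N}[A] \leq \delta$, while the conditioning identity yields
\[
  \Pr_{q_X^N}[A] = \Pr_{\tilde q_X^N}[A \mid E] \leq \frac{\Pr_{\tilde q_X^N}[A]}{\Pr[E]} \leq \frac{\delta}{1/4} = 4\delta.
\]
Taking complements gives $B(p_X, S, \delta) \leq \ln N$ with probability at least $1 - 4\delta$ over $S \sim q_X^N$, which is the claim.

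The main obstacle I anticipate is getting this transfer exactly right, and in particular justifying the identity $\Pr_{q_X^N}[A] = \Pr_{\tilde q_X^N}[A \mid E]$: it is precisely the observation that the mixture, restricted to its all-$q_X$ branch, reproduces $q_X^N$ exactly, and it is this identity that converts a \emph{constant} lower bound on $\Pr[E]$ into the factor-of-four loss $1 - \delta \mapsto 1 - 4\delta$. The delicacy lies entirely in the choice of mixture weight: heavier contamination by $p_X$ would shrink the KL bound but drive $\Pr[E]$ toward zero, while lighter contamination would blow up $D_{\mathrm{KL}}(p_X \| \tilde q_X)$; the weight $1/N$ is what keeps the KL bound at $\ln N$ while holding $\Pr[E] = (1-1/N)^N$ above a fixed constant.
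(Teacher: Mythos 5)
Your proposal is correct and is essentially identical to the paper's proof: the same adversarial mixture $\tilde q_X = \paren{1-\tfrac{1}{N}}q_X + \tfrac{1}{N}p_X$, the same pointwise argument giving $D_{\mathrm{KL}}(p_X\|\tilde q_X) \leq \ln N$, the same application of the guarantee to $\tilde q_X$, and the same coupling via Bernoulli coins with the all-$q_X$ event of probability $(1-1/N)^N \geq 1/4$. The only cosmetic difference is that you bound the failure probability $\Pr[A \mid E] \leq \Pr[A]/\Pr[E] \leq 4\delta$ directly, whereas the paper lower-bounds the success probability $\Pr[\mathrm{Small} \mid E] \geq 1 - \delta/\Pr[\mathrm{Pure}]$; these are the same algebra.
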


\begin{proof}
  Consider distributions $p_X$ and $q_X$ and $N \geq 2$. Define $\tilde{q}_X$ by
  \begin{align*}
    \tilde{q}_X(x) = \left(1-\frac{1}{N}\right)q_X(x) + \frac{1}{N}p_X(x)
  \end{align*}
  We now have $D_{\mathrm{KL}}(p_X||\tilde{q}_X) \leq \ln N$.
  We will prove that from samples $S \sim q_X^N$ we cannot reliably distinguish between $q_X$ and $\tilde{q}_X$.

  The premise is that $B$ yields a high-confidence lower bound for any distribution. Thus we have
  \begin{align}
    \Pr_{S \sim \tilde{q}_X^N}(\mathrm{Small}(S)) \geq 1-\delta \label{eq:kl-key}
  \end{align}
  where $\mathrm{Small}(S)$ represent the event that $B(p_X,S,\delta) \leq \ln N$.
  The distribution $\tilde{q}_X$ equals the marginal on $x$ of
  a distribution on pairs $(s,x)$ where $s$ is the value of Bernoulli variable with bias $1/N$ such that if
  $s = 1$ then $x$ is drawn from $p_X$ and otherwise $x$ is drawn from $q_X$. By Lemma~\ref{lem:outlier}
  the probability that all coins are zero is at least 1/4.  Conditioned on all coins being zero the distributions
  $\tilde{q}_X^N$ and $q_X^N$ are the same. Let $\mathrm{Pure}(S)$ represent the event that all coins are 0.
  We now have
  \begin{align*}
    &\Pr_{S\sim q_X^N}(\mathrm{Small(S)})\\
    &= \Pr_{S \sim \tilde{q}_X^N}(\mathrm{Small}(S)|\mathrm{Pure}(S)) \\
    &= \frac{\Pr_{S \sim \tilde{q}_X^N}(\mathrm{Pure}(S) \wedge \mathrm{Small(S)})}{\Pr_{S \sim \tilde{q}_X^N}(\mathrm{Pure}(S))} \\
    &\geq \frac{\Pr_{S \sim \tilde{q}_X^N}(\mathrm{Pure}(S)) -\Pr_{S \sim \tilde{q}_X^N}(\neg\mathrm{Small}(S))} {\Pr_{S \sim \tilde{q}_X^N}(\mathrm{Pure}(S))} \\
    &\geq \frac{\Pr_{S \sim \tilde{q}_X^N}(\mathrm{Pure}(S)) - \delta}{\Pr_{S \sim \tilde{q}_X^N}(\mathrm{Pure}(S))} \\
    &= 1 - \frac{\delta} {\Pr_{S \sim \tilde{q}_X^N}(\mathrm{Pure}(S))} \\
    &\geq 1 - 4\delta
  \end{align*}

\end{proof}

Note the importance of the fact that the lower bound $B$ is distribution-free (the DV bound is distribution-free): this allows us to construct an adversarial distribution $\tilde{q}_X$
with small KL divergence and turn the premise on its head to upper bound $B$ \eqref{eq:kl-key}.
It may be possible to construct distribution-specific lower bounds that do not suffer the same limitations.
Theorem~\ref{thm:kl} proves that without making such additional assumptions, it is not possible to guarantee a large lower bound on KL divergence.

This result has immediate implications on measuring mutual information which is a special case of KL divergence \eqref{eq:mikl}.
A direct application of Theorem~\ref{thm:kl} gives the following:
in the setting in which we have perfect knowledge of $p_{XY}$ but can only sample from $p_X$ and $p_Y$ (i.e., we cannot compute marginals)
and measure a lower bound on $I(X,Y;p_{XY})$ from $N$ samples, we cannot guarantee that the bound is larger than $\ln N$.
However, this setting is arguably awkward.
In the next section, we make a more natural argument by considering entropy.

\section{STATISTICAL LIMITATIONS ON MEASURING LOWER BOUNDS ON ENTROPY}

Recall that mutual information can be formulated as a difference of entropies
\begin{align}
  I(X,Y; p_{XY}) = H(X; p_X) - H(X|Y; p_{XY}) \label{eq:ent-diff}
\end{align}
where $H(X; p_X)$ is the entropy of $X$ under $p_X$ and $H(X|Y; p_{XY})$ is the conditional entropy of $X$ given $Y$ under $p_{XY}$.
Entropy is nonnegative for discrete variables: in this case we have
\begin{align*}
  I(X,Y; p_{XY}) \leq H(X; p_X)
\end{align*}
It states that the mutual information between $X$ and $Y$ cannot be larger than information content of $X$ alone.
Thus a lower bound on mutual information implies a lower bound on entropy.
We will show that any distribution-free high-confidence lower bound on entropy requires a sample size exponential in the size of the bound.

The above argument seems problematic for the case of continuous densities as differential entropy can be negative.
However, for the continuous case we have
\begin{align}
  I(X,Y; p_{XY}) = \sup_{C,C'}\; I(C(X),C'(Y); p_{XY}) \label{eq:binning}
\end{align}
where $C$ and $C'$ range over all maps from the underlying continuous space to discrete sets (all binnings of the continuous space).
A proof is given in the supplementary material. %Appendix~\ref{app:binning}.
Hence an $O(\ln N)$ upper bound on the measurement of mutual information for the discrete case applies to the continuous case as well.
We assume the discrete case in this section without loss of generality.

We use the following definition.
\begin{defn}
  The type of a multiset $S$, denoted ${\cal T}(S)$, is a function on positive integers such that
  \begin{align*}
    {\cal T}(S)(i) := \abs{\myset{x \in S: \sum_{x' \in S:\; x' = x} 1 = i}}
  \end{align*}
  That is, ${\cal T}(S)(i)$ is the number of elements of $S$ that occur $i$ times in $S$.
\end{defn}

The type ${\cal T}(S)$ contains all information relevant to estimating the actual probability of the items of a given count
and of estimating the entropy of the underlying distribution. The problem of
estimating distributions and entropies from sample types has been investigated by various authors \citep{GT,AlwaysGT,AlwaysGT2,Arora}.
Here we give the following negative result on lower bounding the entropy of a distribution by sampling.

\begin{thm}
Let $B$ be any distribution-free high-confidence lower bound on $H(X; p_X)$ computed from a type
${\cal T}(S)$ with $S \sim p_X^N$.

More specifically, let $B({\cal T},\delta)$ be any real-valued function of a type ${\cal T}$
and a confidence parameter $\delta$ such that for any $p_X$,
with probability at least $1-\delta$ over a draw of $S$ from $p_X^N$, we have
\begin{align*}
  H(X;p_X) \geq B({\cal T}(S),\delta)
\end{align*}
For any such bound, and for $N \geq 50$ and $k \geq 2$, for any $p_X$, with probability at least $1 -\delta - 1.01/k$
over the draw of $S$ from $p_X^N$ we have
\begin{align*}
  B({\cal T}(S),\delta) \leq \ln 2kN^2
\end{align*}
\label{thm:entropy}
\end{thm}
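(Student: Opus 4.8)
The plan is to mimic the adversarial-distribution strategy of Theorem~\ref{thm:kl}: given the distribution $p_X$ from which we sample, I would construct a companion distribution $\tilde{p}_X$ whose entropy is at most $\ln 2kN^2$, and argue that a size-$N$ sample from $p_X$ and one from $\tilde{p}_X$ produce the \emph{same type} with probability at least $1-1.01/k$. Since $B$ reads only the type ${\cal T}(S)$ and not the raw sample, the high-confidence guarantee applied to the low-entropy $\tilde{p}_X$ then transfers to $p_X$, forcing $B$ to be small. The crucial conceptual point, different from Theorem~\ref{thm:kl}, is that the right notion of indistinguishability here is \emph{type}-indistinguishability rather than literal sample equality.

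For the construction I would fix a threshold $\tau = 1/(kN^2)$ and split the support into heavy atoms $\mathcal{H} = \myset{x : p_X(x) > \tau}$ and a light remainder. Because each heavy atom carries mass $> 1/(kN^2)$, there are fewer than $kN^2$ of them. Define $\tilde{p}_X$ to agree with $p_X$ on $\mathcal{H}$ and to redistribute the total light mass $1 - p_X(\mathcal{H})$ uniformly over $m = kN^2$ fresh atoms disjoint from $\mathcal{H}$; then $\tilde{p}_X$ is supported on at most $2kN^2$ atoms, so $H(X;\tilde{p}_X) \leq \ln 2kN^2$. I would couple the two sampling experiments by drawing, for each index $i$, a shared heavy/light label (heavy with probability $p_X(\mathcal{H})$, identical in both experiments) and, on heavy labels, the \emph{same} atom from $p_X$ restricted to $\mathcal{H}$. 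The experiments then differ only on light draws, which lie in $\mathcal{H}^c$ under $p_X$ and in the fresh set under $\tilde{p}_X$, in both cases disjoint from $\mathcal{H}$. The key observation is that if every light draw is distinct, then on both sides the light draws contribute exactly $L$ singletons, where $L$ is the common number of light labels, while the heavy draws contribute identical counts; hence ${\cal T}(S) = {\cal T}(\tilde{S})$. A second-moment (birthday) bound gives a collision probability at most $\binom{N}{2}\tau < 1/(2k)$ on the $p_X$ side, since light atoms have mass $\leq \tau$, and at most $\binom{N}{2}/m < 1/(2k)$ on the $\tilde{p}_X$ side, since the fresh atoms are uniform; a union bound puts the types in agreement except on an event of probability at most $1.01/k$, with the hypothesis $N \geq 50$ and the constant $1.01$ absorbing the integer rounding of $kN^2$ and the slack between $\binom{N}{2}$ and $N^2/2$.

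To conclude, let $\mathrm{Small}(S)$ denote the event $B({\cal T}(S),\delta) \leq \ln 2kN^2$. Applying the hypothesis to $\tilde{p}_X$ and using $H(X;\tilde{p}_X) \leq \ln 2kN^2$ gives $\Pr_{S \sim \tilde{p}_X^N}(\mathrm{Small}(S)) \geq 1-\delta$. Because $\mathrm{Small}$ depends on $S$ only through its type, on the collision-free event the two events $\mathrm{Small}(S)$ and $\mathrm{Small}(\tilde{S})$ coincide, so subtracting the $\leq 1.01/k$ failure probability yields $\Pr_{S \sim p_X^N}(\mathrm{Small}(S)) \geq 1-\delta - 1.01/k$, exactly the claim.

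The step I expect to be the main obstacle is setting up the coupling so that the type is provably preserved: one must simultaneously guarantee that the heavy atoms contribute identical counts (secured by sharing labels and heavy draws) and that the light and fresh draws each contribute only singletons (secured by the two-sided birthday bound), and one must balance the threshold $\tau$ against the number of fresh atoms $m$ so that the total support stays within $2kN^2$ while both collision probabilities remain below $1/(2k)$. Once this coupling is in place, the remainder is the same conditional-probability bookkeeping used in Theorem~\ref{thm:kl}.
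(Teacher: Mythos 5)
Your proposal is correct and follows essentially the same route as the paper: construct an adversarial distribution $\tilde{p}_X$ that keeps the heavy atoms of $p_X$ and spreads the remaining mass uniformly over $kN^2$ further atoms (hence $H(X;\tilde{p}_X) \leq \ln 2kN^2$), apply the premise to $\tilde{p}_X$, and transfer the resulting event back to $p_X$ via type-indistinguishability, controlled by a birthday-paradox bound on repeated low-probability atoms. The differences are only in execution: you split the support by the threshold $1/(kN^2)$ rather than by rank, you formalize indistinguishability with an explicit coupling instead of the paper's identity of conditional type distributions given the no-repeat event $\mathrm{Pure}(S)$, and your additive pairwise union bound replaces the paper's product estimate via $1-z \geq e^{-1.01z}$, which in fact yields the slightly stronger constant $1/k$ and dispenses with the need for $N \geq 50$ (you should, however, add the paper's one-line observation that when the ambient space is too small to host $kN^2$ fresh atoms, $H(X;p_X) < \ln 2kN^2$ and the claim is immediate from the premise).
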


\begin{figure}[ht]
  \begin{center}
    \includegraphics[scale=0.68]{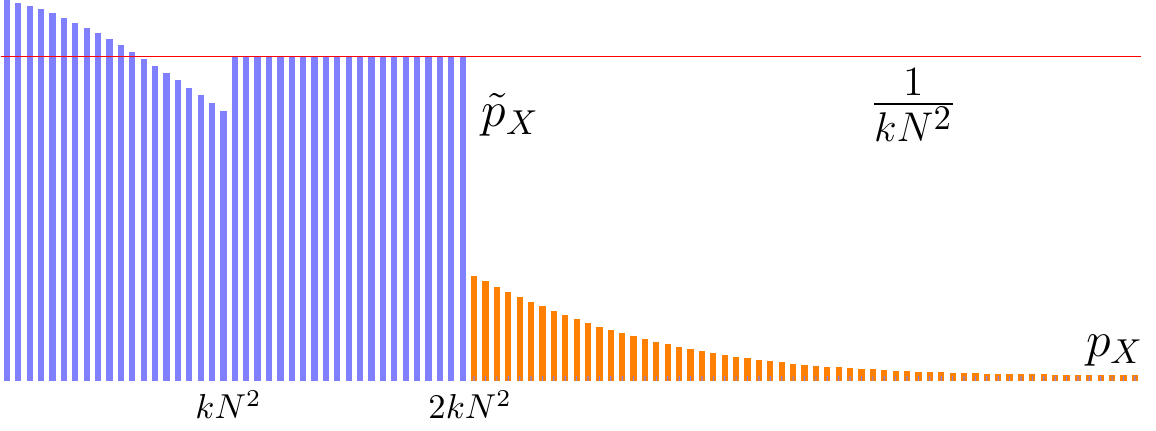}
  \end{center}
  \caption{Construction of an adversarial distribution $\tilde{p}_X$ from $p_X$ with entropy $H(X; \tilde{p}_X) \leq \ln2kN^2$.
  }
  \label{fig:entropy}
\end{figure}

\begin{proof}
Consider a distribution $p_X$ and $N \geq 100$. If the support of $p_X$ has fewer than $2kN^2$ elements then $H(X;p_X) < \ln 2kN^2$
and by the premise of the theorem
we have that, with probability at least $1-\delta$ over the draw of $S$, $B({\cal T}(S),\delta) \leq H(X;p_X)$ and the theorem follows.  If the support of $p_X$ has
at least $2kN^2$ elements then we sort the support of $p_X$ into a (possibly infinite) sequence $x_1,x_2,\ldots$ so that $p_X(x_i) \geq p_X(x_{i+1})$.
We then define a distribution $\tilde{p}_X$ on the elements $x_1 \ldots x_{2kN^2}$ by
\begin{align*}
\tilde{p}_X(x_i) =\left(\begin{array}{ll}
p_X(x_i) & \mbox{for $i \leq kN^2$} \\ \\
\frac{\mu}{kN^2} & \mbox{for $kN^2 < i \leq 2kN^2$}
\end{array}\right)
\end{align*}
where $\mu := \sum_{j > kN^2} p_X(x_j)$. See Figure~\ref{fig:entropy} for illustration.
We will let $\mathrm{Small}(S)$ denote the event that $B({\cal T}(S),\delta) \leq \ln2kN^2$
and let $\mathrm{Pure}(S)$ abbreviate the event that no element $x_i$ for $i > kN^2$ occurs twice in the sample.
Since $\tilde{p}_X$ has a support of size $2kN^2$ we have $H(X; \tilde{p}_X) \leq \ln2kN^2$.
Applying the premise of the lemma to $\tilde{p}_X$ gives
\begin{align}
  \Pr_{S \sim \tilde{p}_X^N}(\mathrm{Small}(S)) \geq 1 - \delta \label{step1}
\end{align}
For a type ${\cal T}$ let $\Pr_{S \sim P^N}({\cal T})$ denote the probability over drawing $S \sim P^N$
that ${\cal T}(S) = {\cal T}$.  We now have
\begin{align*}
  \Pr_{S \sim p_X^N}({\cal T}(S)|\mathrm{Pure}(S)) = \Pr_{S \sim \tilde{p}_X^N}({\cal T}(S)|\mathrm{Pure}(S))
\end{align*}
This gives the following.
\begin{align}
  & \Pr_{S \sim p_X^N}(\mathrm{Small}(S))  \notag \\
  & \geq \Pr_{S \sim p_X^N}(\mathrm{Pure}(S)\wedge \mathrm{Small}(S)) \notag \\
  & = \Pr_{S \sim p_X^N}(\mathrm{Pure}(S)) \;\Pr_{S \sim p_X^N} (\mathrm{Small}(S) \;|\;\mathrm{Pure}(S))  \notag \\
  & = \Pr_{S \sim p_X^N}(\mathrm{Pure}(S)) \;\Pr_{S \sim \tilde{p}_X^N} (\mathrm{Small}(S) \;|\;\mathrm{Pure}(S))  \notag \\
  &\geq \Pr_{S \sim p_X^N}(\mathrm{Pure}(S)) \;\Pr_{S \sim \tilde{p}_X^N} (\mathrm{Pure}(S) \wedge \mathrm{Small}(S)) \label{step2}
\end{align}
For $i > kN^2$ we have $\tilde{p}_X(x_i) \leq 1/(kN^2)$ which gives
\begin{align*}
  \Pr_{S\sim \tilde{p}_X^N}(\mathrm{Pure}(S)) \geq \prod_{j = 1}^{N-1}\;\left(1-\frac{j}{kN^2}\right)
\end{align*}
Using $1-z \geq e^{-1.01z}$ for $z \leq 1/100$ we have the following birthday paradox calculation.
\begin{align*}
  \ln \Pr_{S\sim \tilde{p}_X^N}(\mathrm{Pure}(S))
  &\geq  - \frac{1.01}{kN^2}\sum_{j=1}^{N-1}\; j \\
  &=  - \frac{1.01}{kN^2}\;\frac{(N-1)N}{2}  \\
  &\geq  - \frac{.505}{k}
\end{align*}
Therefore
\begin{align}
  \Pr_{S\sim \tilde{p}_X^N}(\mathrm{Pure}(S)) \geq e^{-.505/k} \geq 1- \frac{.505}{k}  \label{step3}
\end{align}
Applying the union bound to \eqref{step1} and \eqref{step3} gives
\begin{align}
  \Pr_{S \sim \tilde{p}_X^N}(\mathrm{Pure}(S) \wedge \mathrm{Small}(S)) \geq 1 - \delta - \frac{.505}{k}  \label{step4}
\end{align}
By a derivation similar to that of \eqref{step3} we get
\begin{align}
\Pr_{S\sim p_X^N}(\mathrm{Pure}(S)) \geq 1 - \frac{.505}{k}  \label{step5}
\end{align}
Combining \eqref{step2}, \eqref{step4} and \eqref{step5} gives
\begin{align*}
  \Pr_{S \sim p_X^N}(\mathrm{Small}(S)) \geq 1 - \delta - \frac{1.01}{k}
\end{align*}

\end{proof}

Again, we note the importance of the fact that the entropy lower bound $B$ is distribution-free:
this allows us to construct an adversarial distribution $\tilde{p}_X$ with small entropy and apply the premise to upper bound $B$ \eqref{step1}.
Theorem~\ref{thm:main} is derived from Theorem~\ref{thm:entropy} by choosing appropriate hyperparameter values $\de, k$
and using the fact that a lower bound on mutual information implies a lower bound on entropy.

\section{TOWARD ACCURATE MEASUREMENT OF MUTUAL INFORMATION}

Our main contribution is a set of fundamental statistical limitations on measuring a lower bound on mutual information
implied by the difficulty of measuring a lower bound on KL divergence (Theorem~\ref{thm:kl}) or entropy (Theorem~\ref{thm:entropy}).
But it is natural to ask: then how can we achieve an accurate measurement of mutual information?
As a complementary piece of contribution, in this section we explore a new estimator for mutual information that, while lacking formal guarantees, does not suffer from the same limitations.

\subsection{Mutual Information as a Difference of Entropies}

Since mutual information can be expressed as a difference of entropies \eqref{eq:ent-diff}, the problem
of measuring mutual information can be reduced to the problem of measuring entropies.
More specifically, we write mutual information as
\begin{align}
  &I(X,Y; p_{XY}) \notag \\
  &= \inf_{q_X}\; H(p_X, q_X) - \inf_{q_{X|Y}}\; H(p_{X|Y}, q_{X|Y}) \label{eq:proposal}
\end{align}
where we use the fact that the entropy $H(X; p_X)$ is upper bounded by the cross entropy between $p_X$ and $q_X$, denoted $H(p_X, q_X)$,
for all distributions $q_X$ (similarly for the conditional entropy $H(X|Y; p_{XY})$). They are equal iff $q_X = p_X$ and we have
\begin{align}
H(X; p_X) = \inf_{q_X}\; H(p_X, q_X) \label{eq:xent}
\end{align}
Note that the upper-bound guarantee for the cross-entropy estimator \eqref{eq:xent} yields neither an upper bound nor a lower bound guarantee for a difference of entropies \eqref{eq:proposal}.
However, we give theoretical evidence below that, unlike lower bound estimators, upper bound cross-entropy estimators can meaningfully estimate large entropies from feasible samples.

\paragraph{Cross entropy estimation.}
The statistical limitations on distribution-free high-confidence lower bounds on entropy do not arise for cross-entropy upper bounds.
For upper bounds we can show that naive sample estimates of the cross-entropy loss produce meaningful (large entropy) results.
The empirical cross-entropy loss computed on samples $x_1 \ldots x_N$ from a population distribution $p_X$ is
\begin{align}
  \wh{H}^N(p_X,q_X) = \frac{1}{N} \sum_{i=1}^N\; -\ln q_X(x_i) \label{eq:xent-emp}
\end{align}
where $q_X$ is viewed as a model of $p_X$.
We can bound the true loss of $q_X$ by ensuring a minimum probability $e^{-F_{\max}}$ where $F_{\max}$ is then the maximum possible log loss in the cross-entropy objective.\footnote{In language modeling a loss bound exists for any model that ultimately backs off to a uniform distribution on characters.}
Given a loss bound of $F_{\max}$, $\wh{H}^N(p_X,q_X)$ is just the standard sample mean estimator of an expectation of a bounded variable.
In this case we have the following standard confidence interval derived from the Chernoff bound.

\begin{thm}
For any population distribution $p_X$, and model distribution $q_X$ with $-\ln q_X(x)$ bounded to the interval $[0,F_{\max}]$,
with probability at least $1 -\delta$ over the draw of $x_1 \ldots x_N \sim p_X$ we have
\begin{align*}
  H(p_X,q_X) \in \wh{H}^N(p_X,q_X) \pm F_{\max}\sqrt{\frac{ \ln \frac{2}{\delta}}{2N}}
\end{align*}
\end{thm}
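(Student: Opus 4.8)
The plan is to recognize this statement as a textbook confidence interval obtained by applying the two-sided Hoeffding inequality (itself a standard consequence of the Chernoff bound) to a bounded iid average. First I would introduce the per-sample random variable $Z_i := -\ln q_X(x_i)$. Because the $x_i$ are drawn iid from $p_X$, the $Z_i$ are iid, and by the hypothesis that $-\ln q_X(x)$ takes values in $[0, F_{\max}]$, each $Z_i$ is supported on an interval of width $F_{\max}$. The crucial identifications are that the common mean is exactly the cross entropy, $\expected{x \sim p_X}{-\ln q_X(x)} = H(p_X, q_X)$, and that the empirical estimator \eqref{eq:xent-emp} is precisely the sample mean $\wh{H}^N(p_X,q_X) = \frac{1}{N}\sum_{i=1}^N Z_i$.

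Next I would invoke Hoeffding's inequality for the deviation of the sample mean from its expectation. For iid variables supported on an interval of width $F_{\max}$ this yields, for every $t > 0$,
\[
  \Pr\paren{\abs{\wh{H}^N(p_X,q_X) - H(p_X,q_X)} \geq t} \leq 2\exp\paren{-\frac{2Nt^2}{F_{\max}^2}}.
\]
The leading factor of $2$ comes from union-bounding the two one-sided tails, and it is exactly what produces the $\ln\frac{2}{\delta}$ appearing in the stated interval.

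Finally I would calibrate the threshold $t$ to the target confidence. Setting the right-hand side equal to $\delta$ and solving gives $t = F_{\max}\sqrt{\frac{\ln\frac{2}{\delta}}{2N}}$, so that the complementary (good) event $\abs{\wh{H}^N(p_X,q_X) - H(p_X,q_X)} < t$---which is the same as the containment $H(p_X,q_X) \in \wh{H}^N(p_X,q_X) \pm t$---holds with probability at least $1 - \delta$. I do not expect a genuine obstacle here: the only points requiring care are confirming that $-\ln q_X(x)$ has mean equal to the cross entropy (immediate from the definition of cross entropy) and using the two-sided rather than one-sided form so that the symmetric $\pm$ interval holds at confidence $1-\delta$. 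The contrast with the earlier impossibility results is precisely that here the summands are bounded by $F_{\max}$, so no rare large-deviation event can dominate the estimate.
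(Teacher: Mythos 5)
Your proof is correct and is exactly the argument the paper has in mind: the paper states this result without proof, calling it ``the standard confidence interval derived from the Chernoff bound,'' and your application of two-sided Hoeffding to the iid bounded variables $Z_i = -\ln q_X(x_i)$, whose mean is the cross entropy, is precisely that standard derivation. The identification of the sample mean with $\wh{H}^N(p_X,q_X)$, the width-$F_{\max}$ support, and the calibration $t = F_{\max}\sqrt{\ln\frac{2}{\delta}/(2N)}$ are all the right details to fill in.
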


Thus unlike high-confidence distribution-free lower bounds,
high-confidence distribution-free upper bounds on entropy can approach the true cross entropy at the modest sample rate of $1/\sqrt{N}$ even when the true cross entropy
is large.\footnote{It is also possible to give PAC-Bayesian bounds on $H(p_X, q_X^\theta)$ as functions of the parameters $\theta$ of $q_X$.
  See the supplementary material for details.}

\begin{table*}[t!]
  \caption{Estimates of mutual information under different estimators.
    Each estimator is trained for 3,000 steps where at every step it receives $N=128$ samples of $(X,Y)$ and optimizes its weights;
    it is fully tuned over hyperparameter choices with respect to its final estimate.
    %See the main text for a description of the estimators.
    In each row, we boldface the estimate closest to the ground-truth mutual information.\\
  }
  \label{tab:syn-mi}
  \centering
    \begin{tabular}{|c|c|c|c|c|c|c|c||c|c|}
    \hline
    DV & MINE & NWJ & NWJ (JS) & CPC & CPC+NWJ & DoE (Gaussian) & DoE (Logistic) & $I(X,Y)$ & $\ln N$ \\
    \hline
    2.72 & 2.57 & 1.99 & 1.50 & 2.73 & 2.77 & 4.19 & \textbf{4.13} & 4.13 & 4.85 \\
    \hline
    10.27 & 9.38 & 9.25 & 5.55 & 4.82 & 8.18 & 18.38 & \textbf{18.42} & 18.41 & 4.85 \\
    \hline
    61.96 & 34.56 & 50.46 & 13.41 & 4.85 & 10.45 & \textbf{104.18} & 104.16 & 106.29 & 4.85 \\
    \hline
    \end{tabular}

\end{table*}

\begin{figure*}[t!]
  \begin{center}
    \includegraphics[scale=0.6]{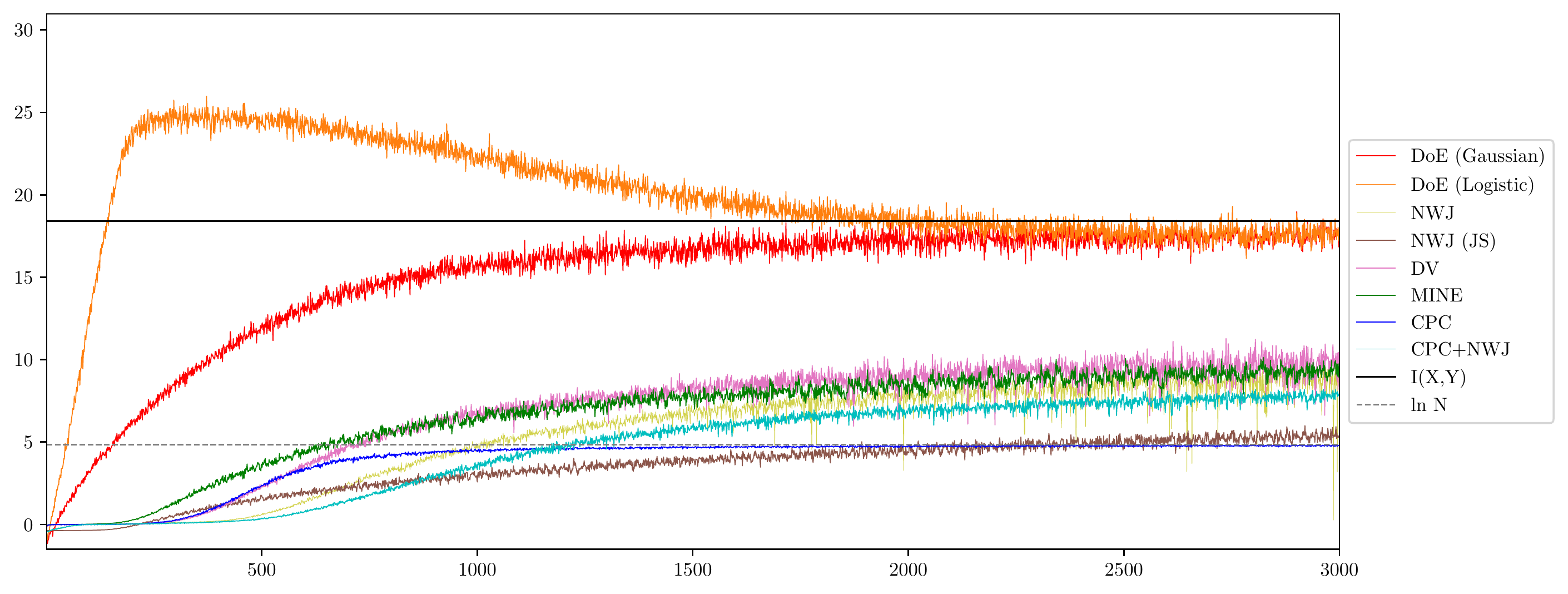}
  \end{center}
  \caption{A plot of mutual information estimates during training.
    %The ground-truth random variables are correlated Gaussians with $I(X,Y)=18.41$.
    At every training step ($x$-axis), each estimator receives a (shared) minibatch of $N=128$ samples,
    computes the current estimate ($y$-axis), and updates weights.
    We show the best-case scenario for each estimator by fully tuning its hyperparameters with respect to the final estimate.
    %See the main text for a description of the estimators.
    }
  \label{fig:synthetic-large}
\end{figure*}

\subsection{Experiments}

We present experiments with the proposed difference-of-entropies (DoE) estimator to gain a better understanding of its empirical behavior.
First, we compare DoE with existing lower-bound estimators in a standard synthetic setting based on correlated Gaussians.
Next, we apply DoE on the task of measuring mutual information between related articles and translation pairs and show an evidence of large mutual information.

In the following, DoE computes an empirical estimate of \eqref{eq:proposal} by taking
iid samples $(x_1, y_1) \ldots (x_N, y_N)$ from $p_{XY}$ and computing
\begin{align}
  \wh{H}(p_X, q_X) &= \inf_{q_X}\; \frac{1}{N} \sum_{i=1}^N - \log q_X(x_i) \notag\\
  \wh{H}(p_{X|Y}, q_{X|Y}) &= \inf_{q_{X|Y}}\; \frac{1}{N} \sum_{i=1}^N - \log q_{X|Y}(x_i|y_i) \notag\\
  \wh{I}(X,Y; p_{XY}) &= \wh{H}(p_X, q_X) - \wh{H}(p_{X|Y}, q_{X|Y}) \label{eq:mi-emp}
\end{align}
where each minimization corresponds to fitting a probabilistic model on the samples with the cross-entropy loss.

\subsubsection{Synthetic Experiments}

Following \citet{poole2019variational}, we define
random variables $X, Y \in \R^d$ where $(X_i, Y_i)$ are standard normal with correlation $\rho \in [-1, 1]$.
It can be checked that $I(X, Y) = - (d/2) \ln (1 - \rho^2)$.
We use $d=128$ and vary $\rho$ to experiment with different values of mutual information.
We compare with the following lower-bound estimators:
the DV bound \eqref{def:dv},
MINE (i.e., DV with an ``improved gradient estimator'') \citep{MINE},
NWJ \citep{nguyen2010estimating},
NWJ estimated by optimizing Jensen-Shannon divergence (NWJ (JS)),
CPC \citep{Contrastive},
and a nonlinear interpolation between NWJ and CPC (NWJ+CPC).
We refer the reader to \citet{poole2019variational} for a detailed exposition of these estimators.
We parameterize the distributions in DoE (i.e., $q_X$ and $q_{X|Y}$ in \eqref{eq:mi-emp}) by isotropic Gaussian (correct) or logistic (misspecified).

Table~\ref{tab:syn-mi} shows mutual information estimates given by these estimators.
All estimators are trained for 3,000 steps where at every step they use $N=128$ samples drawn from $p_{XY}$ to update their weights.
We tune the hyperparameters of each estimator (e.g., learning rate, number of hidden units, initialization strategies,
estimator-specific hyperparameters such as the mixing weights in NWJ+CPC and MINE) to minimize $\abs{I(X,Y) - \hat{m}}$ where $\hat{m}$ is the final estimate.
Thus the results assume an oracle that gives an optimal configuration.
All details can be found in the code: \url{https://github.com/karlstratos/doe}.

It is clear that (1) DoE obtains the most accurate estimates whether $I(X,Y) > \ln N$ or $I(X,Y) \leq \ln N$,
(2) DoE is the only estimator that achieves accurate estimates when underlying mutual information is large, and
(3) either Gaussian or logistic parameterization of DoE yields accurate estimates.
Furthermore, Table~\ref{tab:syn-mi} does not show the fact that DV, MINE, and NWJ are highly unstable (especially in the large mutual information setting).
In particular, while it seems that they can achieve estimates much larger than $\ln N$,
it is not representative of their general performance which is fraught with numerical overflow/underflow problems leading to completely off estimates.
\citet{poole2019variational} discuss the large variance of these estimators.
In contrast, DoE is based on the standard cross-entropy loss and very stable.

Figure~\ref{fig:synthetic-large} shows a plot of the best training session for each estimator.
DoE is a clear outlier as the only accurate estimator of mutual information.
Unlike lower-bound estimators, DoE can approach $I(X,Y)$ from above or below.

\subsubsection{Mutual Information Between Articles and Translations}

We consider two datasets for the choice of $(X,Y)$:
\begin{enumerate}
\item Related news article pairs extracted from the Who-Did-What dataset \citep{onishi2016did}
\item English-German translation pairs extracted from the IWSLT 2014 dataset
\end{enumerate}
We expect that mutual information is large in either setting:
a random article (sentence) has large entropy, but given a related article (translation) the uncertainty is drastically reduced.
We use a standard LSTM language model for $q_X$ and a standard attention-based translation model for $q_{X|Y}$.
More details of the experiments can be found in the supplementary material.
%Appendix~\ref{app:experiments}.

Table~\ref{tab:mi} shows the estimates of mutual information on the test portion of data.
We use log base 2 to accommodate the bit interpretation of entropies (rather than nats).
We see that mutual information is estimated to be over 120 bits on related article pairs and 54 bits on translation pairs.
Mutual information is estimated to be close to zero for shuffled pairs: this shows that the estimator can also handle small mutual information.

\begin{table}[t!]
  \caption{Estimates of mutual information (in bits) on article pairs and translation pairs based on the difference of entropies~\eqref{eq:mi-emp}.\\}
  \label{tab:mi}
  \centering
    \begin{tabular}{l|r}
    \hline
    distribution $p_{XY}$       &  \eqref{eq:mi-emp} \\
    \hline
    related article pairs      &  $120.34$   \\
    shuffled article pairs     &  $-2.38$  \\  % 1131.74 - ``1130.09''
    translation pairs          &  $54.72$ \\
    shuffled translation pairs &  $-2.64$ \\  % 81.73 - 83.56 nats
    \hline
    \end{tabular}

\end{table}

\section{RELATED WORK}

We make a few additional remarks on related work to better contextualize our work.
In the continuous setting, a classical approach to estimating mutual information is based on computing the average
log of the distance to the $k$-th nearest neighbor in samples \citep{KNN-MI}.
\citet{KNN-MI2} show that this estimator suffers exponential sample complexity and propose more refined nearest-neighbor methods \citep{KNN-MI2}.
In contrast, we establish that serious statistical limitations are inherent to the measurement of mutual information no matter what estimator is used.

There is a line of work that develops efficient estimators for entropy by assuming distributions with very small support---distributions with support smaller than the sample size.
\citet{valiant2011estimating} show that it is possible to achieve an optimal sample rate of $O(n/\ln n)$ where $n$ is the support size.
Past work on analyzing minimax bounds likewise assume small support \citep{jiao2015minimax,han2015minimax,kandasamy2015nonparametric}.
In this case, the entropy of the distribution cannot be larger than the log of the number of samples.
This is in agreement with, but does not imply, our results.
We are interested in the large entropy setting, such as a distribution over all possible images or articles.
We cannot have the number of samples equal to the number of possible images or articles.

As discussed in depth in Section~\ref{sec:motivation}, we are motivated by the approach in MINE which measures the DV bound to estimate and maximize mutual information \citep{MINE}.
CPC is another notable example that illustrates the statistical limitations of measuring mutual information \citep{Contrastive}.
CPC maximizes a lower bound on mutual information through noise contrastive estimation:
it is shown that this lower bound cannot be larger than $\ln k$ where $k$ is number of negative samples used in the contrastive choice.
Complementary to our work, a recent work by \citet{poole2019variational} investigates tradeoffs between bias and variance in estimating variational bounds on mutual information.

There is a class of representation learning methods such as Brown clustering \citep{brown1992class} and the information bottleneck method \citep{bottleneck} that maximize a lower bound on mutual information given by the data processing inequality (DPI).
In these methods, we learn ``coding'' functions $(C, C')$ by optimizing the objective
\begin{align*}
  \max_{C, C'} I(C(X), C'(Y); p_{XY}) \leq I(X,Y; p_{XY})
\end{align*}
where the inequality is by the DPI.
Information theoretic co-training \citep{IT-cotrain} considers a similar lower bound and has been shown to be useful for label induction in speech and text \citep{PartOfSpeech}.
Measuring these lower bounds is subject to the same limitations presented in this paper.

\section{CONCLUSIONS}

Maximizing mutual information is well motivated as a method of unsupervised pretraining of representations that maintain semantic signal while dropping uninformative noise.
However, measuring and maximizing mutual information from finite data is a difficult training objective.
In this paper, we have shown serious statistical limitations inherent to measuring lower bounds on various information theoretic measures including KL divergence, entropy, and mutual information.
We have also given theoretical arguments that representing mutual information as a difference of entropies,
and estimating those entropies by minimizing cross-entropy loss, is a more statistically justified approach than maximizing a lower bound on mutual information.

Unfortunately cross-entropy upper bounds on entropy fail to provide
either upper or lower bounds on mutual information---mutual information is a difference of entropies.
We cannot rule out the possible existence of superintelligent models,
models beyond current expressive power, that dramatically reduce cross-entropy loss.
Lower bounds on entropy can be viewed as proofs of the non-existence of superintelligence.
We should not surprised that such proofs are infeasible.

%\subsubsection*{Acknowledgements}

%Use the unnumbered third level heading for the acknowledgements.  All
%acknowledgements go at the end of the paper.

\bibliography{mmi_limit}
\bibliographystyle{natbib}

\appendix

\section{PROOF OF THEOREM~2.1}
\label{app:dv}

For any distribution $r_X$ over $X$, we can write

{\small
\begin{align}
  D_{\mathrm{KL}}(p_X||q_X) &= \expected{x \sim p_X}{ \ln \frac{r_X(x)}{q_X(x)}} + D_{\mathrm{KL}}(p_X||r_X) \notag\\
  &\geq \expected{x \sim p_X}{ \ln \frac{r_X(x)}{q_X(x)}} \label{eq:dv-lb}
\end{align}
}

\noindent
Let $f:\mathcal{X} \ra \R$ be a bounded function and define
\begin{align*}
  r_X(x) = \frac{q_X(x) e^{f(x)}}{ \expected{x \sim q_X}{e^{f(x)}} }  &&\forall x \in \mathcal{X}
\end{align*}
which is a valid distribution over $X$.
Plugging this into the lower bound in \eqref{eq:dv-lb}, we have

{\small
\begin{align}
\expected{x \sim p_X}{ \ln \frac{r_X(x)}{q_X(x)}} = \expected{x \sim p_X}{f(x)} - \ln \expected{x \sim q_X}{e^{f(x)}} \label{eq:dv}
\end{align}
}

\noindent
By \eqref{eq:dv-lb}, the supremum of \eqref{eq:dv} over the choice of $f$ is precisely the KL divergence between $p_X$ and $q_X$.
It can be easily verified that an optimal $f$ is given by
\begin{align*}
f(x) = \ln \frac{p_X(x)}{q_X(x)}  &&\forall x \in \mathcal{X}
\end{align*}
Since \eqref{eq:dv} is invariant to translation of $f$, without loss of generality
we can assume that the range of $f$ is bounded in $[0, F_{\mathrm{max}}]$ for some constant $F_{\mathrm{max}}$.

\section{MUTUAL INFORMATION AS THE SUPREMUM OVER BINNINGS}
\label{app:binning}

We now show that the mutual information $I(X,Y; p_{XY})$ for $X$ and $Y$ continuous can be expressed as the supremum of $I(C(X),C'(Y); p_{XY})$ over discrete binnings of the continuous space.
We first consider the case where $X, Y \in \R$ and where the mutual information can be written as a Riemann integral over densities.
\begin{align*}
  &I(X,Y; p_{XY})\\ %&= D_{\mathrm{KL}}(p_{XY}||p_X \times p_Y) \\
  &= \int p_{XY}(x,y)  \ln \frac{p_{XY}(x,y)}{p_X(x)p_Y(y)} \;dx\;dy \\
  &= \lim_{\ep \ra 0}\;\sum_{i,j \in \Zi}\; p_{XY}(i\ep,j\ep)\;\ln \frac{p_{XY}(i\ep,j\ep)}{p_X(i\ep)p_Y(j\ep)} \ep^2
\end{align*}
\noindent
where $\Zi$ is the set of all integers. For each $i \in \Zi$, define the half-open interval $C_{i, \ep} := [i \ep, (i+1) \ep)$.
  The probability of the interval is approximately $\ep p_X(i\ep)$ under $p_X$ (similarly for $p_Y$ and $p_{XY}$).
  Therefore we can write the last expression as
  \begin{align*}
    &\lim_{\ep \ra 0}\;\sum_{i,j \in \Zi}\; p_{XY}(C_{i,\ep}\times C_{j,\ep})\;\ln \frac{p_{XY}(C_{i,\ep}\times C_{j,\ep})}{p_X(C_{i,\ep})p_Y(C_{j,\ep})} \\
    &= \lim_{\ep \ra 0}\;\sum_{i,j \in \Zi}\; p_{I_\ep J_\ep}(i,j)\; \ln \frac{p_{I_\ep J_\ep}(i,j)}{p_{I_\ep}(i)p_{J_\ep}(j)} \\
    &= \lim_{\ep \ra 0}\;I(I_\ep,J_\ep; p_{I_\ep J_\ep})
  \end{align*}
  where $(I_\ep, J_\ep)$ denote the indices $(i, j)$ such that $x \in C_{i, \ep}$ and $y \in C_{j, \ep}$ for $(x,y) \sim p_{XY}$.

  This proof immediately generalizes to higher dimensions where the mutual information can be expressed as a Riemann integral.
  We believe that this statement remains true for arbitrary measures on product spaces where the mutual information is finite.
  However the proof for this extremely general case appears to be nontrivial.

\section{PAC-BAYESIAN BOUNDS}
\label{app:pac-bayes}

The PAC-Bayesian bounds apply to ``broad basin'' losses and loss estimates such as the following:
\begin{align*}
  H_\si(S,q_X^\theta)       &=  \expected{x \sim p_X}{ \expected{\epsilon \sim N(0,\si I)}{ -\ln q_X^{\theta+ \ep}(x) } } \\
  \wh{H}_\si(S,q_X^\theta) &=  \frac{1}{|S|} \sum_{x \in S}\; \expected{\epsilon \sim N(0,\si I)}{ -\ln q_X^{\theta + \ep}(x)}
\end{align*}
Under mild smoothness conditions on $q_X^\theta(x)$ as a function of $\theta$ we have
\begin{align*}
\lim_{\si \rightarrow 0} \; H_\si(p_X,q_X^\theta)    &= H(p_X,q_X^\theta) \\
\lim_{\si \rightarrow 0} \;\wh{H}_\si(S,q_X^\theta) &=  \wh{H}(S,q_X^\theta)
\end{align*}

\noindent
An $L_2$ PAC-Bayesian generalization bound \citep{PAC-Bayes} gives that for
any parameterized class of models and any bounded notion of loss, and
any $\lambda > 1/2$ and $\si > 0$,
with probability at least $1 - \delta$ over the draw of $S$ from $p_X^N$ we have the following simultaneously for all parameter vectors $\theta$.

{\small
\begin{align*}
& H_\si(p_X,q_X^\theta) \\
&\leq  \frac{1}{1 - \frac{1}{2\lambda}}\left(\wh{H}_\si(S,q_X^\theta) + \frac{\lambda F_{\max}}{N}\left(\frac{||\theta||^2}{2\si^2} + \ln \frac{1}{\delta}\right)\right)
\end{align*}
}

\noindent
It is instructive to set $\lambda = 5$ in which case the bound becomes.

{\small
  \begin{align*}
    &H_\si(p_X,q_X^\theta) \\
    &\leq \frac{10}{9}\left(\wh{H}_\si(S,q_X^\theta)+ \frac{5 F_{\max}}{N}\left(\frac{||\theta||^2}{2\si^2} + \ln \frac{1}{\delta}\right)\right)
  \end{align*}
}

\noindent
While this bound is linear in $1/N$, and tighter in practice than square root bounds, note that there is a small residual gap when holding $\lambda$ fixed at 5
while taking $N \rightarrow \infty$.
In practice the regularization parameter $\lambda$ can be tuned on holdout data.
One point worth noting is the form of the dependence of the regularization coefficient on $F_{\max}$, $N$ and the basin parameter $\si$.

It is also worth noting that the bound can be given in terms of ``distance traveled''
in parameter space from an initial (random) parameter setting $\theta_0$.

{\small
\begin{align*}
&H_\si(p_X,q_X^\theta) \\
& \leq   \frac{10}{9}\left(\wh{H}_\si(S,q_X^\theta) + \frac{5 F_{\max}}{N}\left(\frac{||\theta-\theta_0||^2}{2\si^2} + \ln \frac{1}{\delta}\right)\right)
\end{align*}
}

\noindent
Evidence is presented in \citet{NonVacuous} that the distance traveled bounds are tighter in practice than traditional $L_2$ generalization bounds.

\section{EXPERIMENT DETAILS}
\label{app:experiments}

\paragraph{Article pairs.}
We take pairs from the Who-Did-What dataset \citep{onishi2016did}.
The pairs in this dataset were constructed by drawing articles from the LDC Gigaword newswire corpus.
A first article is drawn at random and then a list of candidate second articles is drawn using the first sentence of the first article as an information retrieval query.
A second article is selected from the candidates using criteria described in \citet{onishi2016did},
the most significant of which is that the second article must have occurred within a two week time interval of the first.
The training statistics of this dataset after preprocessing is given in Table~\ref{tab:article-pair}.

\paragraph{Translation pairs.}
Our translation pairs consists of English-German sentence pairs extracted from the IWSLT 2014 dataset.
The training statistics of this dataset after preprocessing is given in Table~\ref{tab:translation}.

\begin{table}[t!]
  \centering
    \begin{tabular}{l|rr}
    \hline
                & train (tgt) & train (src) \\
                \hline
    \# articles  & 68348	& 68348 \\
    vocab size	& 100001	& 87941  \\
\# words	    &20271664	&19072167 \\
avg length	& 296&	279 \\
max length &	400&	400 \\
min length &	10&	12 \\
\hline
    \end{tabular}
      \caption{Training statistics of the article pairs}
  \label{tab:article-pair}
\end{table}

\begin{table}
  \centering
    \begin{tabular}{l|rr}
    \hline
                & train (tgt) & train (src) \\
                \hline
\# sentences&	160239&	160239 \\
vocab size&	24726&	35445 \\
\# words&	3275729&	3100720 \\
avg length &	20&	19 \\
max length&	175&	172 \\
min length&	2&	2 \\
\hline
    \end{tabular}
      \caption{Training statistics of the translation pairs}
  \label{tab:translation}
\end{table}

\paragraph{Model.}

We train an LSTM encoder-decoder model where the decoder doubles as both the decoder of a translation model and a language model.
The decoder is a left-to-right 2-layer LSTM in which a single word embedding matrix is used for both input embeddings and the softmax predictions.
When this model is trained as a language model on PTB using standard hyperparameter values it achieves test perplexity of 72.26.
The encoder is a separate left-to-right 2-layer LSTM using the same word embeddings as the decoder.
We use the input-feeding attention archietecture of \citet{luong2015effective}.

The model is trained using SGD and batch size 10 with no BPTT-style truncation.
The dimension of the input/hidden states is 900 (thus 1800 for the input-feeding decoder).
We use step-wise dropout with rate 0.65 on word embeddings and hidden states.
The model is trained for 40 epochs and the model that achieves the best validation perplexity is selected.
The sequence-level cross entropy is estimated as $\mathrm{SQXENT} = \frac{1}{M} \mathrm{NLL}$ where $\mathrm{NLL}$ is the negative log likelihood of the corpus and $M$ is the total number of sequences in the corpus.

Mutual information is estimated by taking the difference in $\mathrm{SQXENT}$ between the language model and the translation model (17).%\eqref{eq:mi-emp}.
For article pairs, we obtain
\begin{align*}
  \wh{I}(X,Y; p_{XY}) = 1131.74 - 1048.33 = 83.41
\end{align*}
in nats which translates to $120.34$ bits.
For translation pairs, we obtain
\begin{align*}
  \wh{I}(X,Y; p_{XY}) =  81.73 - 43.80 = 37.9
\end{align*}
in nats which translates to $54.72$ bits.

\end{document}